\title{Physical Wigner functions}
\author{Carlos L. Benavides-Riveros$^{1,2}$
and Jos\'e M. Gracia-Bond\'ia$^2$
\\ \\
$^1$Zentrum f\"ur Interdisziplin\"are Forschung, Wellenberg 1
\\
Bielefeld 33615, Germany
\\ \\
$^2$Departamento de F\'isica Te\'orica and BIFI research center,
\\
Universidad de Zaragoza, 50009 Zaragoza, Spain
}
\date{\today}
\DeclareMathOperator{\tr}{tr}       
\newcommand{\al}{\alpha}            
\newcommand{\bt}{\beta}             
\newcommand{\dl}{\delta}            
\newcommand{\Ga}{\Gamma}            
\newcommand{\ga}{\gamma}            
\newcommand{\la}{\lambda}           
\newcommand{\om}{\omega}            
\newcommand{\sg}{\sigma}            
\newcommand{\vs}{\varsigma}         
\newcommand{\dn}{{\mathord{\downarrow}}} 
\newcommand{\half}{\tfrac{1}{2}}    
\newcommand{\ihalf}{\tfrac{i}{2}}   
\newcommand{\ox}{\otimes}           
\newcommand{\R}{\mathbb{R}}         
\newcommand{\shalf}{{\scriptstyle\frac{1}{2}}} 
\newcommand{\sihalf}{{\scriptstyle\frac{i}{2}}} 
\newcommand{\up}{{\mathord{\uparrow}}} 
\newcommand{\word}[1]{\quad\mbox{#1}\quad} 
\newcommand{\wh}{\widehat}          
\newcommand{\x}{\times}             
\newcommand{\7}{\dagger}            
\renewcommand{\.}{\cdot}            
\newcommand{\vecform}{\bm}              
\newcommand{\ff}{\vecform{f}}           
\newcommand{\PP}{\vecform{P}}           
\newcommand{\pp}{\vecform{p}}           
\newcommand{\rr}{\vecform{r}}           
\newcommand{\RR}{\vecform{R}}           
\renewcommand{\ss}{\vecform{s}}         
\newcommand{\vv}{\vecform{v}}           
\newcommand{\xx}{\vecform{x}}           
\newcommand{\ZZ}{\vecform{Z}}           
\newcommand{\zz}{\vecform{z}}           
\newtheorem{thm}{Theorem}               
\theoremstyle{definition}
\newtheorem{exa}{Example}               
\def\section{\@startsection{section}{1}{\z@}{-3.5ex plus -1ex minus
 -.2ex}{2.3ex plus .2ex}{\large\bfseries}}
\def\subsection{\@startsection{subsection}{2}{\z@}{-3.25ex plus -1ex
 minus -.2ex}{1.5ex plus .2ex}{\normalsize\bfseries}}
\begin{document}

\maketitle

\begin{abstract}
In spite of their potential usefulness, the characterizations of
Wigner functions for Bose and Fermi statistics given by O'Connell and
Wigner himself almost thirty years ago~\cite{Freudianslip} has drawn
little attention. With an eye towards applications in quantum
chemistry, we revisit and reformulate them in a more convenient way.
\end{abstract}

\section{Introduction and history}
\label{sec:introibo}

By definition, a spin-zero $n$-body Wigner quasiprobability
distribution (or Wigner function for short) is given in terms of the
density matrix in configuration or momentum space, respectively
$\rho,\,\wh\rho$, by~\cite{Wigner32}:
\begin{align}
W_\rho(\xx;\pp) := \frac{1}{\pi^{dn}} \int \rho(\xx - \zz;\xx + \zz)\,
e^{2i\pp\.\zz} \,d\zz = \frac{1}{\pi^{dn}} \int \wh\rho\,(\pp -
\zz;\pp + \zz)\, e^{-2i\xx\.\zz} \,d\zz,
\label{eq:onsen} 
\end{align}
with the notation $\xx = (\xx_1,\dots,\xx_n)$ for $n$ bodies, where
$\xx_i\in\R^d$ (say $d = 3$ for ordinary space), and similarly for
$\pp$ and~$\zz$. For a pure state one has $\rho(\xx;\xx')=\Psi(\xx)
\Psi^*(\xx')$, with~$\Psi$ the corresponding wave function. We have
taken units so that $\hbar=1$. The relation $\rho\leftrightarrow
W_\rho$ is one-to-one, being the restriction to the convex set of
positive operators of unit trace of a linear isomorphism of functions
of two sets of variables, essentially the inverse of the unitary
\textit{Wigner transformation}.%
\footnote{V\'arilly and one of us proved it in \cite{Phoebe} to be of
order 24.}

Averages of Wigner functions with classical phase space observables
reproduce the expected values predicted by standard quantum mechanics.
This is why they have become an important tool, successfully adopted
in statistical physics, quantum optics \cite{corazondeleon} and now
chemistry \cite{Pluto,Hermione}. It is however not easy to
characterize them, although necessary and sufficient conditions for a
phase space function to be an admissible Wigner quasiprobability
distribution are known \cite{NarcoOC86, Titania,Pluto}.

Now, a natural question is: when does a Wigner function 
correspond to a wave function symmetric or antisymmetric under
permutations of its variables? This was posed since the early days, in
view of applications: see the references in~\cite{Freudianslip}.
But only the latter article purported to offer a general answer.%
\footnote{We leave aside the second-quantized approach to Wigner
quasiprobability, which has known scant success.}

On their characterization, O'Connell and Wigner wrote: ``It must be
admitted that this equation for the distribution function, postulating
the Bose statistics for a system of spin~0 particles, is much more
complicated that the corresponding equation for the density matrix''.
After discussing one-half spin systems, towards the conclusion of the
paper, twice they repeat this gloomy assessment almost verbatim. It is
true that their key formulas~(14a) or~(14b), together with their~(11),
look rather unwieldy. This moreover seems to have discouraged
borrowing of the second and more interesting part of their paper, on
systems of spin~$1/2$ particles, hardly exploited elsewhere. They
offered no examples.

\smallskip

Of late, the development of density functional theory based on 1-body
Wigner functions led us to reexamine the matter. Their
$N$-representability conditions are well understood.%
\footnote{Including non-ensemble aspects~\cite{Klyachko}.}
Existence and some of the properties of the Wigner function energy
functional were established in~\cite{Pluto}. The theory has the
flavour of an almost exact Thomas--Fermi formalism in phase space,
needing ``only'' to incorporate electron correlation.

We give here a \textit{simple} answer to the question of quantum
statistics for Wigner functions. We will be dealing mainly with
identical fermions, for which the Wigner function is a spin multiplet;
hence definition~\eqref{eq:onsen} will be insufficient. Even so, our
characterization takes the form of mere preservations or changes of
sign under permutation of two variables ---just as in the ordinary
formalism of quantum mechanics. This makes it trivial that the square
of such a permutation induces the identity, which is not at all obvious
in~\cite{Freudianslip}. So we throw long-due light on the achievements
of that paper, hoping to rescue from near-oblivion its insights.

The summary of the article is as follows. We deal first with the
conditions for symmetric or antisymmetric scalar Wigner functions
---both are required for quantum chemistry purposes. We illustrate our
contentions with a few example classes of concrete Wigner functions in
Section~3. Then we go on to Wigner spin orbitals, revisiting the
second part of~\cite{Freudianslip}. In Section~5 we exemplify again.
Section~6 is the conclusion.

\section{The basic theorems}
\label{sec:loshuesos}

It will be enough to consider the 2-body problem. Bringing in mean and
difference coordinates, or, in chemists' jargon, \textit{extracule}
and \textit{intracule} coordinates, respectively given~by
\begin{equation}
\RR = \frac{1}{\sqrt2}(\xx_1 + \xx_2),  \qquad
\rr = \frac{1}{\sqrt2}(\xx_1 - \xx_2),
\label{eq:culeable} 
\end{equation}
with the customary abuse of notation, the symmetry\slash antisymmetry
conditions (say, on configuration space) for spinless bodies
respectively read
$\rho(\RR,\rr;\RR',\rr') = \pm\rho(\RR,-\rr;\RR',\rr')$ or
$\rho(\RR,\rr;\RR',\rr') = \pm\rho(\RR,\rr;\RR',-\rr')$. Together they
imply
\begin{equation}
\rho(\RR,\rr;\RR',\rr') = \rho(\RR,-\rr;\RR',-\rr');
\label{eq:antisic} 
\end{equation}
and reciprocally, the latter indistinguishability property together with
either of the above conditions implies the other.

It is not hard to see that with
\begin{equation}
\PP = \frac{1}{\sqrt 2}(\pp_1 + \pp_2),  \qquad
\pp = \frac{1}{\sqrt 2}(\pp_1 - \pp_2),
\label{eq:onemoment} 
\end{equation}
the meaning of $W(\RR,\rr;\PP,\pp)$ is unambiguous. This is due to the
linear symplectic invariance of the Wigner function formalism. Then
\eqref{eq:antisic} is equivalent to
\begin{equation}
W(\RR,\rr;\PP,\pp) = W(\RR,-\rr;\PP,-\pp).
\label{eq:kiboshi} 
\end{equation}
Since the discussion turns around the intracule variables, it is worth
regarding $\RR,\PP$ as parameters, introducing the following notation:
$$
\om_{\RR,\PP}(\rr,\pp) := W(\RR,\rr;\PP,\pp).
$$
Let us invoke the following partial Fourier transform on the intracule
set of variables:
$$
\tilde\om_{\RR,\PP} (\vv,\pp) := \int \om_{\RR,\PP} (\rr,\pp)
\,e^{2i\vv\.\rr} \,d\rr = \tilde\om_{\RR,\PP} (-\vv,-\pp).
$$
The last equality is seen to hold when \eqref{eq:antisic} or
equivalently \eqref{eq:kiboshi} hold, and reciprocally. Now we have
two momentum-like intracular variables, and the following appears
natural.

\begin{thm} 
A scalar Wigner 2-body function comes from a density matrix
symmetric\slash antisymmetric in its first set of variables,
respectively in its
second set, if and only if, for all $\vv$ and $\pp$:
\begin{equation}
\tilde\om_{\RR,\PP}(\vv,\pp) = \pm \tilde\om_{\RR,\PP}(\pp,\vv);
\word{respectively} \tilde\om_{\RR,\PP}(\vv,\pp) =
\pm \tilde\om_{\RR,\PP}(-\pp,-\vv).
\label{eq:manga} 
\end{equation}
\end{thm}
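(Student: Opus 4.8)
The plan is to turn the doubly transformed object $\tilde\om_{\RR,\PP}(\vv,\pp)$ into a single, transparent Fourier object and then simply read off the two symmetry conditions. Since the excerpt already licenses treating $\RR,\PP$ as spectators (they enter only through the extracule Wigner transform, which the intracule manipulations leave untouched), I would first substitute the Wigner-transform definition of $\om_{\RR,\PP}(\rr,\pp)$ into the partial Fourier transform defining $\tilde\om_{\RR,\PP}(\vv,\pp)$. In the resulting double integral the mean intracule coordinate $\rr$ is paired with $\vv$ and the half-difference $\zz$ with $\pp$. Changing integration variables to the ket and bra intracule coordinates $\rr-\zz$ and $\rr+\zz$, the combined phase reorganizes as $i(\vv-\pp)\cdot(\rr-\zz)+i(\vv+\pp)\cdot(\rr+\zz)$, so the two exponentials decouple onto the two arguments of the density matrix.

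This yields the key identity
\begin{equation*}
\tilde\om_{\RR,\PP}(\vv,\pp) = c\,\Phi_{\RR,\PP}(\vv-\pp,\,\vv+\pp),
\end{equation*}
where $c$ is a fixed numerical constant and $\Phi_{\RR,\PP}$ is the full double Fourier transform of the (extracule-transformed) density matrix in its ket and bra \emph{intracule} arguments, with the frequency $\vv-\pp$ conjugate to the first (unprimed) coordinate and $\vv+\pp$ conjugate to the second (primed) one. Establishing this identity cleanly is the step I expect to demand the most care: one must keep the roles of the first and second sets straight through the $45^\circ$ change of coordinates, track the Jacobian, and confirm that the linear substitution $(\vv,\pp)\mapsto(\vv-\pp,\vv+\pp)$ — a bijection — carries away no stray factors.

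With the identity in place the rest is bookkeeping. Symmetry\slash antisymmetry of $\rho$ in its first set of variables is exactly evenness\slash oddness of the intracule density matrix under sign reversal of its ket coordinate (the permutation $\xx_1\leftrightarrow\xx_2$ sending $\rr\to-\rr$); after Fourier transformation this is evenness\slash oddness in the first frequency. Under the substitution above, reversing $\vv-\pp$ while fixing $\vv+\pp$ is precisely the swap $(\vv,\pp)\mapsto(\pp,\vv)$, giving $\tilde\om_{\RR,\PP}(\vv,\pp)=\pm\tilde\om_{\RR,\PP}(\pp,\vv)$. Likewise, symmetry\slash antisymmetry in the second set reverses the bra coordinate, hence the second frequency; reversing $\vv+\pp$ while fixing $\vv-\pp$ is the map $(\vv,\pp)\mapsto(-\pp,-\vv)$, which delivers $\tilde\om_{\RR,\PP}(\vv,\pp)=\pm\tilde\om_{\RR,\PP}(-\pp,-\vv)$.

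Finally, because both the Fourier transform and the linear substitution are invertible, every implication is reversible, so the two conditions are sufficient as well as necessary: read backward, they force the intracule density matrix to be even\slash odd in the appropriate argument and hence $\rho$ to be symmetric\slash antisymmetric. I would close by observing that the maps $(\vv,\pp)\mapsto(\pp,\vv)$ and $(\vv,\pp)\mapsto(-\pp,-\vv)$ are manifestly involutions, which makes it immediate that performing the permutation twice returns the identity — exactly the feature the Introduction notes is obscure in the treatment of~\cite{Freudianslip}.
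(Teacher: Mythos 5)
Your argument is correct, and it reorganizes the paper's computation rather than merely reproducing it. The paper proves necessity by inserting the symmetry of $\rho$ directly into the triple integral defining $\tilde\om_{\RR,\PP}(\vv,\pp)$ and recognizing the result, after interchanging the roles of $\rr$ and $\zz$, as $\pm\tilde\om_{\RR,\PP}(\pp,\vv)$; it then handles sufficiency by appealing separately to the injectivity of $\rho\leftrightarrow W_\rho$, and declares the second condition ``similar.'' You instead extract one structural identity --- that $\tilde\om_{\RR,\PP}(\vv,\pp)$ equals, up to a constant, the double Fourier transform of the extracule-transformed density matrix in its two intracule arguments, evaluated at the frequencies $(\vv-\pp,\vv+\pp)$ --- and then read everything off from reflections of the frequency variables. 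The underlying change of variables is the same $45^\circ$ rotation in both treatments (your phase identity $i(\vv-\pp)\.(\rr-\zz)+i(\vv+\pp)\.(\rr+\zz)=2i\vv\.\rr+2i\pp\.\zz$ and the correspondences $(\vv,\pp)\mapsto(\pp,\vv)$, $(\vv,\pp)\mapsto(-\pp,-\vv)$ for reversal of the first and second frequency all check out), but your packaging buys three things at once: both conditions of the theorem follow from a single lemma instead of two parallel calculations; the converse is transparent, since the whole map $\rho\mapsto\tilde\om_{\RR,\PP}$ is exhibited as a composition of bijections, rather than requiring a separate appeal to the invertibility of the Wigner transformation; and the fact that the exchange operation is an involution is manifest, which is precisely the feature the authors advertise as their improvement over O'Connell and Wigner. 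The one step you rightly flag as delicate --- keeping the ket and bra intracule coordinates straight and tracking the Jacobian --- is harmless here, since the constant $c$ and the orientation of the substitution play no role in a statement about equality up to sign.
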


\begin{proof}
Consider the following integral:
\begin{align*}
\tilde\om_{\RR,\PP}(\vv,\pp) &= \frac1{\pi^{2d}} \int \rho\big(\RR
- \ZZ, \rr - \zz;\RR + \ZZ, \rr + \zz\big)\, e^{2i\PP\.\ZZ +
2i\pp\.\zz} \,e^{2i\vv\.\rr} \,d\ZZ \,d\zz \,d\rr
\\
&= \frac{\pm1}{\pi^{2d}} \int \rho\big(\RR - \ZZ, \zz - \rr;\RR + \ZZ,
\zz + \rr\big)\, e^{2i\PP\.\ZZ + 2i \vv\.\rr}
\,e^{2i\pp\.\zz}\,d\ZZ \,d\rr \,d\zz
\\
&= \pm \int \om_{\RR,\PP}(\zz,\vv) \,e^{2i\pp\.\zz}\,d\zz =:
\pm  \tilde\om_{\RR,\PP} (\pp,\vv).
\end{align*}
Thus necessity of the first condition is proved. Conversely, given
that $\rho\leftrightarrow W_\rho$ is one-to-one, it is readily seen
that~\eqref{eq:manga} holds only if $\rho$ is respectively 
symmetric\slash antisymmetric. The proof of the second condition is
similar. Clearly, if we assume
$\tilde\om_{\RR,\PP}(\vv \pp)=\tilde\om_{\RR,\PP}(-\vv,-\pp)$, either
of the conditions of~\eqref{eq:manga} implies the other. Needless to
say, one may formulate the conclusion analogously in terms of
$\hat\om_{\RR,\PP}(\rr,\ss):=\int\om_{\RR,\PP}(\rr;\pp)\,
e^{-2i\ss\.\pp}\,d\pp$.
\end{proof}

\smallskip

Our result extends to $n$-body functions by just considering intracule
and extracule coordinates for the first pair of adjacently labeled
particles. That is, we require only one condition of the type
\eqref{eq:manga}, together with the indistinguishability
condition~\eqref{eq:kiboshi} for all intracules.

\section{Examples}
\label{sec:lacarne}

Use of Gaussian basis sets in density functional theory with Wigner
functions is if anything more natural than in standard quantum
chemistry~\cite{Jensen}. This motivates our first example.

\begin{exa} 
Take as a boson-type wave function the symmetric product of two
general Gaussians centered at the origin:
\begin{equation}
\Psi(x_1,x_2)
= C \bigl( \psi_1(x_1)\psi_2(x_2) + \psi_1(x_2)\psi_2(x_1) \bigr),
\label{eq:gauss-pair} 
\end{equation}
where, for $j = 1,2$:
$$
\psi_j(x) = \frac{d^{1/4}_j}{\pi^{1/4}}\,
e^{-\shalf d_j x^2 - \sihalf b_j d_j x^2}
\word{with} d_j > 0, \ b_j \in \R.
$$
(The normalization factor $C$ is unimportant here.) The corresponding
2-body quasidensity~is
\begin{align}
W(x_1,x_2; p_1,p_2)
&\propto W_{11}(x_1;p_1) W_{22}(x_2;p_2)
+ W_{22}(x_1;p_1) W_{11}(x_2;p_2)
\notag \\
&\qquad + W_{12}(x_1;p_1) W_{21}(x_2;p_2)
+ W_{21}(x_1;p_1) W_{12}(x_2;p_2).
\label{eq:boson} 
\end{align}
Here $W_{jk}$ represents an \textit{interference}, namely,
\begin{gather*}
W_{jk}(x,p) = \frac{d^{1/4}_j d^{1/4}_k}{\pi\,d^{1/2}_{jk}}\, 
e^{-A_{jk} x^2 - 2B_{jk} xp - d_{jk}^{-1} p^2},  \word{where}
\\
d_{jk} := \half(d_j + d_k) + \ihalf(b_j d_j - b_k d_k),  \qquad
b_{jk} := \half(b_j d_j + b_k d_k) - \ihalf(d_j - d_k),
\\
A_{jk} := d_{jk} + b_{jk}^2/d_{jk},  \qquad  B_{jk} := b_{jk}/d_{jk}.
\end{gather*}
The quadratic form in the exponent of the $W_{jk}$ is given by a
symmetric, symplectic matrix with positive definite real
part~\cite{Pluto}. When $k =j$, we have a Gaussian \textit{pure
state},
$$
W_{jj}(x,p) = \pi^{-1}\,e^{-(d_j + b^2_jd_j) x^2 - 2b_j xp - d_j^{-1}
p^2};
$$
whose coefficient matrix is real, symplectic and positive definite
\cite{Robert}.

To see that the quasidensity \eqref{eq:boson} fulfils
\eqref{eq:manga}, change variables according to \eqref{eq:culeable}
and~\eqref{eq:onemoment}, and let $\la_{ijkl}(R,r;P,p) :=
W_{ij}(x_1;p_1) W_{kl}(x_2;p_2)$. Now, multiplying by $e^{2ivr}$ and
integrating with respect to~$r$, we obtain, after a little work,
$$
\int \la_{jjkk}(R,r;P,p) \, e^{2ivr} \,dr = 
\int \la_{jkkj}(R,r;P,v) \, e^{2ipr} \,dr \word{with} k \neq j,
$$
thereby verifying condition~\eqref{eq:manga} for this example.
\textit{Mutatis mutandis}, Gaussian sets like the ones in
\eqref{eq:gauss-pair} with a minus instead of a plus sign exemplify
the antisymmetric case.
\end{exa}

\begin{exa} 
In the early years of Quantum Mechanics, as a prolegomenon to
calculating the energy levels for helium, Heisenberg~\cite{Heisenberg}
studied the \textit{harmonium}, an exactly integrable analogue of a
two-electron atom. It exhibits two fermions interacting with an
external harmonic potential and repelling each other by a Hooke-type
force. Being simple, but not trivial, this system has been borrowed in
many contexts. It is sometimes called the ``Moshinsky atom'', since
Moshinsky reintroduced it with the purpose of studying correlation
energy~\cite{Moshinsky68,Hermione}. Also, it has been recruited to
investigate Bose--Einstein condensation~\cite{RS00}, black-hole
entropy~\cite{Srednicki93} and sundry issues in quantum chemistry
---see~\cite{Laetitia,Marmulla} and references therein.

The harmonium Hamiltonian in Hartree-like units is given by:
\begin{equation*}
H(\xx_1,\xx_2;\pp_1,\pp_2) = \frac{|\pp_1|^2}{2} + \frac{|\pp_2|^2}{2}
+ \frac{k}{2}\big(|\xx_1|^2 + |\xx_2|^2\big) - \frac{\dl}{4}\, |\xx_1
- \xx_2|^2.
\end{equation*}
Introducing extracule and intracule coordinates and the frequencies
$\nu := \sqrt k$ and $\mu := \sqrt{k - \dl}$, the Hamiltonian is
rewritten as that of two independent oscillators:
$$
H = H_R + H_r := \frac{|\PP|^2}{2} + \frac{\nu^2|\RR|^2}{2} +
\frac{|\pp|^2}{2} + \frac{\mu^2|\rr|^2}{2}\,.
$$
Since the problem factorizes completely, we work in dimension one. The
\textit{orbital} part of such an eigenfunction is written
$\phi_n(R)\psi_m(r)$, with the parity of~$\psi_m(r)$ even for spin
singlet states and odd for triplet states. Wigner quasiprobabilities
associated to those eigenvectors have the general form:
$W_n(R,P)W_m(r,p)$, where, with $L_n$ denoting the $n$-th Laguerre
polynomial:
\begin{align*}
W_n(R,P) = \frac{(-1)^n}{\pi} \, L_n(4H_R/\nu)\, e^{-2H_R/\nu}, \qquad
W_m(r,p) = \frac{(-1)^m}{\pi} \, L_m(4H_r/\mu)\, e^{-2H_r/\mu}.
\end{align*}
Defining
$$
\Ga_m(v,p) = (-1)^m \int W_m(r,p)\, e^{2ivr} \,dr
= \frac{1}{\pi} \int L_m(4H_r/\mu)\, e^{-2H_r/\mu}\, e^{2ivr} \,dr,
$$
by use of the generating function of the Laguerre polynomials we
obtain
\begin{align*}
\sum_{m=0}^\infty \Ga_m(v,p)\, x^m &= \frac{1}{\pi(1 - x)} \int
e^{-4(H_r/\mu)x/(1-x)} e^{-2(H_r/\mu)} e^{2ivr} \,dr
\\
&= \frac{1}{\pi(1 - x)}\, e^{-\frac{1+x}{1-x}\,p^2/\mu} \int
e^{-\frac{1+x}{1-x}\,\mu r^2 + 2ivr} \,dr
\\
&= \frac{1}{\sqrt{\pi\mu(1 - x^2)}}\, \exp\biggl( -
\frac{1+x}{1-x}\,\frac{p^2}{\mu} - \frac{1-x}{1+x}\,\frac{v^2}{\mu}
\biggr) = \sum_{m=0}^\infty (-)^m \Ga_m(p,v)\, x^m.
\end{align*}
Thus $\Ga_m(v,p)=-\Ga_m(p,v)$ for $m$ odd and $\Ga_m(v,p)= \Ga_m(p,v)$
for $m$ even, and whenever the wave function $\Psi_{nm}\equiv\phi_n
\psi_m$ is symmetric / antisymmetric, the corresponding Wigner
functions $W_{nm}(R,P;r,p)=W_n(R,P)W_m(r,p)$ in agreement
with~\eqref{eq:manga} do respectively satisfy:
$$
\int W_{nm}(R,P;r,p)\,e^{2ivr}\,dr = \pm\int W_{nm}(R,P;r,v)\,e^{2ipr}
\,dr.
$$
\end{exa}

\section{Spin Wigner functions}
\label{sec:lamadredelcordero}

The standard definition for \textit{spin} Wigner functions, found for
instance in the seminal work on atomic Wigner functions~\cite{SD87},
regards the latter (just as the density matrices) as $2^n\x 2^n$ matrices
in spin space. 
\begin{align*}
\mathcal{W}^{\vs_1,\dots,\vs_n;\vs'_1,\dots,\vs'_n}_\rho(\xx_1,\dots,\xx_n;
\pp_1, \dots,\pp_n) := \frac1{\pi^{dn}} \int\rho\big(\xx - \zz;\vs_1,
\dots,\vs_n; \xx + \zz,\vs'_1,\dots,\vs'_n\big)e^{2i\pp.\zz} \,d\zz.
\end{align*}
Here $\vs$ and $\vs'$ denote the discrete spin variables. In
particular, a $1$-body atomic Wigner distribution in matrix form would
be of the form
$$
\mathcal{W}^{(1)} = \begin{pmatrix} W^{\up_1\up_{1'}}(\xx;\pp) &
W^{\up_1\dn_{1'}}(\xx,\pp) \\
W^{\dn_1\up_{1'}} (\xx,\pp) &
W^{\dn_1\dn_{1'}}(\xx,\pp) \end{pmatrix};
$$
and a $2$-body atomic Wigner distribution:
\begin{align*}
\mathcal{W}^{(2)} = \begin{pmatrix} W^{\up_1\up_2\up'_1\up'_2}(1,2) &
W^{\up_1\up_2\up'_1\dn'_2}(1,2) & W^{\up_1\up_2\dn'_1\up'_2}(1,2) &
W^{\up_1\up_2\dn'_1\dn'_2}(1,2) \\
W^{\up_1\dn_2\up'_1\up'_2}(1,2) & W^{\up_1\dn_2\up'_1\dn'_2}(1,2) & 
W^{\up_1\dn_2\dn'_1\up'_2}(1,2) & W^{\up_1\dn_2\dn'_1\dn'_2}(1,2) \\
W^{\dn_1\up_2\up'_1\up'_2}(1,2) & W^{\dn_1\up_2\up'_1\dn'_2}(1,2) & 
W^{\dn_1\up_2\dn'_1\up'_2}(1,2) & W^{\dn_1\up_2\dn'_1\dn'_2}(1,2) \\
W^{\dn_1\dn_2\up'_1\up'_2}(1,2) & W^{\dn_1\dn_2\up'_1\dn'_2}(1,2) & 
W^{\dn_1\dn_2\dn'_1\up'_2}(1,2) & W^{\dn_1\up_2\dn'_1\dn'_2}(1,2) 
\end{pmatrix},
\end{align*}
where $(1,2)$ on the right hand side standing for the orbital phase
space variables. We normalize them by
$\tr\int\mathcal{W}{(1)} \,d\xx \,d\pp = 1$,
$\tr\int\mathcal{W}{(2)} \,d1 \,d2 = 1$ (not quite the custom in
chemistry). Symmetry of~$\rho$ under interchange of \textit{both}
orbital and spin variables entails:
\begin{equation}
\mathcal{W}^{(2)}_{\rho}(1,2) = A \,\mathcal{W}^{(2)}_{\rho}(2,1)\, A,
\word{where} A = \begin{pmatrix} 1 & 0 & 0 & 0 \\ 0 & 0 & 1 & 0 \\
0 & 1 & 0 & 0 \\ 0 & 0 & 0 & 1\end{pmatrix}.
\label{eq:banzai} 
\end{equation}

The matrix approach contains some redundancies in practice, and was
implicitly criticized by Wigner in his last
years~\cite{Freudianslip,yellowed}. He sought instead to endow the
spin Wigner functions with ostensible physical meaning, by arranging
their entries into tensors under the rotation group. Given the
essentially unitary matrix,
$$
U := \frac{1}{2} \bigl(\sg^\kappa_{\vs\vs'}\bigr)
= \frac{1}{2} \begin{pmatrix} 1 & 0 & 0 & 1 \\
0 & 1 & 1 & 0 \\ 0 & i & -i & 0 \\ 1 & 0 & 0 & -1 \end{pmatrix}
\word{where $\kappa=0,x,y,z$, \, and $UU^\7=1/2$,}
$$
for the 1-body quasiprobability these are provided by
$$
\begin{pmatrix} W^0 \\ W^x \\ W^y \\ W^z \end{pmatrix} = U
\begin{pmatrix} W^{\up_1\up_{1'}} \\
W^{\up_1\dn_{1'}} \\ W^{\dn_1\up_{1'}} \\
W^{\dn_1\dn_{1'}} \end{pmatrix}.
$$
There the entries on the right hand side are not real in general; but
on the left side they are. Matters turn interesting for the 2-body
function, whereupon
\begin{equation}
\begin{pmatrix} W^0 \\ W^x \\ W^y \\ W^z \end{pmatrix} \ox
\begin{pmatrix} W^0 \\ W^x \\ W^y \\ W^z \end{pmatrix} = \big(U \ox
U\big) \begin{bmatrix}\begin{pmatrix} W^{\up_1\up_{1'}} \\
W^{\up_1\dn_{1'}} \\ W^{\dn_1\up_{1'}} \\ W^{\dn_1\dn_{1'}}
\end{pmatrix} \ox \begin{pmatrix} W^{\up_2\up_{2'}} \\
W^{\up_2\dn_{2'}} \\ W^{\dn_2\up_{2'}} \\ W^{\dn_2\dn_{2'}}
\end{pmatrix}\end{bmatrix}.
\label{eq:sayonara} 
\end{equation}
The central question addressed by O'Connell and Wigner is the
transformation of $U\ox U$ under particle exchange $1\leftrightarrow
2$; this is better answered in terms of the physical tensor components
of~$\mathcal W$. Denoting representations of the rotation group by
their dimension, and since the 1-body function is the sum of one
rotational scalar and one vector part, the addition rule for angular
momentum yields:
$$
\bigl( [{\bf1}] \oplus [{\bf3}] \bigr)^{\otimes2}
= 2[{\bf1}] \oplus 3[{\bf3}] \oplus [\bf5];
$$
that is two scalars, three vectors and one quadrupole (symmetric
traceless tensor). Let now $W^{00}$ replace $W^0\ox W^0$ in the
notation, and so on. We reorganize the left hand side
of~\eqref{eq:sayonara} as a spin multiplet:
\begin{align*}
W^{sc1} &= W^{00} - W^{xx} - W^{yy} - W^{zz},
\\
W^{sc2} &= \tfrac13\big(3W^{00} + W^{xx} + W^{yy} + W^{zz}\big),
\\
W^{v1}  &= \big(W^{x0} + W^{0x}, W^{y0} + W^{0y}, 
W^{z0} + W^{0z}\big) 
\\
W^{v2}  &= \big(W^{x0}_- + iW^{zy}_-, W^{y0}_- + iW^{xz}_-,
W^{0z}_- + iW^{xy}_-\big)
\\
W^{v3}  &= \big(W^{x0}_- - iW^{zy}_-, W^{y0}_- - iW^{xz}_-,
W^{z0}_- - iW^{yx}_-\big)
\\
W^q &= \big(-W^{xx} - W^{yy} + 2W^{zz}, W^{xy} + W^{yx}, W^{yz} +
W^{zy}, W^{xx} - W^{yy}, W^{xz} + W^{zx}\big),
\end{align*}
with $W^{x0}_-:=W^{x0}-W^{0x},\,W^{xy}_-:=W^{xy}-W^{yx}$, and so on.
The first two terms of the multiplet are the scalars, then the three
vectors, and the quadrupole in a standard presentation.

In summary, collecting $(sc1,sc2,v1,v2,v3,q)\equiv\ff$, for us an
electronic 2-body Wigner function is a multiplet denoted
$\om(\rr,\pp;\ff)$, the extracule labels being suppressed. The Fermi
symmetry condition for the exchange of one set of spin coordinates
$\vs_1\leftrightarrow \vs_2$ \textit{and} of the spatial coordinates,
borrowing the notation used in the spin-zero case, reads:
$$
\tilde\om(\vv,\pp;\ff) :=
-\tilde\om\bigl(\pp,\vv;\ff_{\vs_1\leftrightarrow\vs_2}\bigr).
$$
Then the exchange transformation rule for the Wigner function
multiplet comes out even simpler, in that there are fewer minus
signs than the one for the density matrix:
\begin{align}
\begin{pmatrix} \tilde\om^{sc1}(\vv,\pp) \\
\tilde\om^{sc2}(\vv,\pp) \\ \tilde\om^{v1}(\vv,\pp) \\
\tilde\om^{v2}(\vv,\pp) \\ \tilde\om^{v3}(\vv,\pp) \\
\tilde\om^q(\vv,\pp) \end{pmatrix} = \begin{pmatrix} +1 &&&&& \\ & -1
&&&& \\ && -1 &&& \\ &&& -1 && \\ &&&& +1 & \\ &&&&& -1 \end{pmatrix}
\begin{pmatrix} \tilde\om^{sc1}(\pp,\vv) \\
\tilde\om^{sc2}(\pp,\vv) \\ \tilde\om^{v1}(\pp,\vv) \\ 
\tilde\om^{v2}(\pp,\vv) \\ \tilde\om^{v3}(\pp,\vv) \\ 
\tilde\om^q(\pp,\vv) \end{pmatrix}.
\label{eq:gaijin} 
\end{align}
This because $\tilde\om^{sc1}$ is odd under $\vs_1\leftrightarrow
\vs_2$, while $\tilde\om^{sc2}$ is even, and so on. Of course, one can
choose to impose the Fermi condition on the \textit{primed} spin
coordinates. Then $\tilde\om^{v2},\,\tilde\om^{v3}$ are peculiar in
that they become respectively odd and even. But the general
indistinguishability condition~\eqref{eq:banzai} now implies
\begin{align*}
\begin{pmatrix} \tilde\om^{sc1}(\vv,\pp) \\
\tilde\om^{sc2}(\vv,\pp) \\ \tilde\om^{v1}(\vv,\pp) \\
\tilde\om^{v2}(\vv,\pp) \\ \tilde\om^{v3}(\vv,\pp) \\
\tilde\om^q(\vv,\pp) \end{pmatrix} = \begin{pmatrix} +1 &&&&& \\ & 1
&&&& \\ && 1 &&& \\ &&& -1 && \\ &&&& -1 & \\ &&&&& 1 \end{pmatrix}
\begin{pmatrix} \tilde\om^{sc1}(-\vv,-\pp) \\
\tilde\om^{sc2}(-\vv,-\pp) \\ \tilde\om^{v1}(-\vv,-\pp) \\ 
\tilde\om^{v2}(-\vv,-\pp) \\ \tilde\om^{v3}(-\vv,-\pp) \\ 
\tilde\om^q(-\vv,-\pp) \end{pmatrix};
\end{align*}
and this saves the day.

\section{Examples}
\label{sec:lacarne2}

Formula~\eqref{eq:gaijin} is well adapted to the needs of quantum
chemistry since the standard Hamiltonian there does not contain spin
coordinates; thus one uses a spin-restricted formalism~\cite{pirita},
with the same set of symmetric or antisymmetric spatial orbitals for
``up'' and ``down'' spins. Then several components of the Wigner
multiplet vanish.

For a two-fermion system, the singlet pure spin state is of the form
\begin{align*}
\mathcal W^\mathrm{singlet} = \half\big(\up_1\up_{1'}\dn_2\dn_{2'} -
\up_1\dn_{1'}\dn_2\up_{2'} - \dn_1\up_{1'}\up_2\dn_{2'} +
\dn_1\dn_{1'}\up_2\up_{2'}\big)W = W^{sc1}.
\end{align*}
The only non-zero contribution is given by the first scalar, and in
the occasion the Pauli principle naturally reads $\tilde\om^{\rm
singlet}(\vv,\pp)=\tilde\om^{\rm singlet}(\pp,\vv)$. The one-body
Wigner distribution for this state is just~$W^0$.

For triplet states, one deals with a linear superposition of symmetric
spin states, namely,
$$
|\Psi\rangle = \al |\up_1 \up_2\rangle + \bt \frac{|\up_1 \dn_2\rangle +
|\dn_1 \up_2\rangle}{\sqrt 2} + \ga|\dn_1 \dn_2\rangle
\word{provided that} |\al|^2  + |\bt|^2 + |\ga|^2 = 1.    
$$
In terms of the Wigner spin multiplet, it reads:
\begin{align*}
\mathcal W^\mathrm{triplet} = &\big(|\al|^2 + |\ga|^2\big)
\big(W^{sc2} + \tfrac13 W^{q,1}\big) + |\bt|^2 \big(W^{sc2} - \tfrac23
W^{q,1}\big)
\\
+ & \tfrac1{\sqrt2}\big(\al^* \bt + \al \bt^* + \ga^*\bt +
\ga\bt^*\big) \, W^{v1,x} + \tfrac{i}{\sqrt2}\big(\al^* \bt - \al
\bt^* - \ga^*\bt + \ga\bt^*\big) \, W^{v1,y}
\\
+ & \big(|\al|^2 - |\ga|^2\big) \, W^{v1,z} + \tfrac{i}{\sqrt2}
\big(\al^* \bt - \al \bt^* + \ga^*\bt - \ga\bt^*\big) \, W^{q,3}
\\
+ & i\big(\ga\al^* - \ga^*\al\big) \, W^{q,2} + \big(\ga\al^* +
\ga^*\al\big) \, W^{q,4} + \tfrac1{\sqrt2}\big(\al^* \bt + \al \bt^* -
\ga^*\bt - \ga\bt^*\big) \, W^{q,5}.
\end{align*}
In the multiplet expansion of the triplet there appear only the second
scalar, the first vector and the quadrupole. They all carry minus
signs in~\eqref{eq:gaijin}, and the transformation rule reads
$\tilde\om^\mathrm{triplet}(\vv;\pp)=-\tilde\om^\mathrm{triplet}
(\pp;\vv)$. Its one-body distribution is equal to
\begin{align*}
W^0 + \tfrac1{\sqrt2}(\al^* \bt + \al \bt^* + \ga^*\bt +
\ga\bt^*) W^x + \tfrac{i}{\sqrt2}(\al^* \bt - \al \bt^* -
\ga^*\bt + \ga\bt^*)W^y + (|\al|^2 - |\ga|^2)W^z .
\end{align*}
Notice how different are the $\beta =\gamma = 0$ or
$\beta = \alpha = 0$ states, leading respectively to
$W^{00} + W^{zz} \pm (W^{z0} + W^{0z})$,
from the $\alpha = \gamma = 0$ state, to which there corresponds
$W^{00}+W^{xx}+ W^{yy}-W^{zz}$. The fact that these states belong in
different strata under rotations, somewhat hidden in the Hilbert space
formalism ---see the discussion in~\cite[Sect.~7.7.c]{BL}--- is
here~apparent.

\section{Conclusion}
\label{sec:vale-vale}

On the theoretical side, the little attention received by
paper~\cite{Freudianslip} has concerned mostly its adaptation to the
spherical Moyal formalism for spin, developed by V\'arilly and one of
us in~\cite{Miranda}. For a recent example see~\cite{Outbrackpaper},
and~\cite{12104075} for a nice treatment of that formalism emphasizing
its connection to the tensor operators~\cite{Blum}. The middle path
followed here, in the footsteps of Wigner and O'Connell, displays the
physical appeal and information of the spherical method; and it
appears better adapted to the needs of quantum chemistry. We can only
speculate that it essentially coincides with the (so far, unpublished)
approach ``by~the theory of group
representations''~\cite[Ch.~8]{AnnMoyalbook} arrived at by Moyal in
his last years.

\subsection*{Acknowledgments}

We are grateful to Joseph C. V\'arilly for a careful reading of
the manuscript. CLBR~has been supported by a Banco Santander
scholarship. JMGB has been supported by a grant from the regional
government of Arag\'on. He owes as well to the Zentrum f\"ur
interdisziplin\"are Forschung, for support and warm hospitality.

\end{document}